\def\y{\mathbf{y}}
\def\I{\mathbf{I}}
\def\X{\mathbf{X}}
\def\0{\mathbf{0}}
\newcommand{\x}{ {\bf x} }
\newcommand{\bbeta}{ {\boldsymbol \beta} }
\newcommand*\bigcdot{\mathpalette\bigcdot@{.5}}
\newcommand*\bigcdot@[2]{\mathbin{\vcenter{\hbox{\scalebox{#2}{$\m@th#1\bullet$}}}}}
\begin{document}

\title{Bayesian Distance Weighted Discrimination}

\author{\name Eric F. Lock \email elock@umn.edu \\
       \addr Division of Biostatistics\\
       School of Public Health \\
       University of Minnesota\\
       Minneapolis, MN 55446, USA
}

\date{}


\maketitle

\begin{abstract}
Distance weighted discrimination (DWD) is a linear discrimination method that is particularly well-suited for classification tasks with high-dimensional data.  The DWD coefficients minimize an intuitive objective function, which can solved very efficiently using state-of-the-art optimization techniques.  However, DWD has not yet been cast into a model-based framework for statistical inference.   In this article we show that DWD identifies the mode of a proper Bayesian posterior distribution, that results from a particular link function for the class probabilities and a shrinkage-inducing proper prior distribution on the coefficients.  We describe a relatively efficient Markov chain Monte Carlo (MCMC) algorithm to simulate from the true posterior under this Bayesian framework.  We show that the posterior is asymptotically normal and derive the mean and covariance matrix of its limiting distribution.    Through several simulation studies and an application to breast cancer genomics we demonstrate how the Bayesian approach to DWD can be used to (1) compute well-calibrated posterior class probabilities, (2) assess uncertainty in the DWD coefficients and resulting sample scores, (3) improve power via semi-supervised analysis when not all class labels are available, and (4) automatically determine a penalty tuning parameter within the model-based framework. R code to perform Bayesian DWD is available at \url{https://github.com/lockEF/BayesianDWD}. \\
\end{abstract}

\begin{keywords}
  Cancer genomics, distance weighted discrimination, high-dimensional data, probabilistic classification, support vector machines.
\end{keywords}

\section{Introduction}
\label{intro}

High-dimensional data occur when a massive number of features are available for each unit of observation in a sample set.  Such data are now very common across a variety of application areas, and this has motivated a large number of data analysis methods that were developed specifically for the high-dimensional context.   For supervised analysis, where the task is to predict or describe an outcome, methods that are appropriate for the high-dimensional context often begin with an objective function to be optimized but lack a clear model-based framework for statistical inference.  One reason for this is because fitting statistical models directly via (e.g.) maximum likelihood is prone to over-fitting and lack of identifiability in the high-dimensional context, especially when the number of features is larger than the sample size.  Thus, objective functions that admit more regularization are required, and these often do not have a model-based interpretation in the frequentist framework. For example,  a common approach in the high-dimensional context is to incorporate penalty terms that enforce shrinkage or sparsity (e.g., $l_2$ or $l_1$ penalization) within a classical predictive model (e.g., generalized linear models). 

A Bayesian framework provides more flexibility for model-based supervised analysis of high-dimensional data, as appropriate regularization can be induced through the specified prior distribution.  Indeed, there are a number of instances in which optimizing a commonly used penalized objective function was later shown to give the mode of a posterior distribution under a particular Bayesian model.   For example, a Bayesian linear model with a normally distributed outcome and a normal prior on the coefficients corresponds to solving an $l_2$ penalized least squares objective (i.e., ridge regression)  \citep{hsiang1975bayesian}.  Similarly, the mode of the posterior under a Bayesian framework with a double-exponential prior on the model coefficients corresponds to $l_1$ penalization (i.e., the lasso) \citep{park2008bayesian}.  Such equivalences can be very useful to better understand and interpret the original optimization problem, and also provide a potential framework for statistical inference that is based on the original objective.  
 
In this article we introduce a Bayesian formulation for Distance Weighted Discrimination (DWD) \citep{marron2007distance}.  DWD is a popular approach for supervised analysis with a binary outcome and high-dimensional data.  It aims to identify a linear combination of the features that distinguish the sample groups corresponding to the binary outcomes (i.e., projections onto a separating hyperplane); in this respect, it is comparable to various versions of the Support Vector Machine (SVM) \citep{cortes1995support} or penalized Fisher's Linear Discriminant Analysis (LDA) \citep{witten2011penalized}.  However, relative to its competitors, DWD often has superior generalizability for settings in which the sample size is small relative to the dimension, i.e.,  the high dimension low sample size (HDLSS) context.  In its original formulation, DWD minimizes the inverse distances of the observed units from the separating hyperplane with a penalty term.  This circumvents the ``data piling" issue of SVM, wherein projections tend to pile up at the margins of the separating hyperplane for HDLSS data as a symptom of over-fitting and lack of generalizability. Since its inception, many other versions of DWD have been developed, including extensions that allow more than two classes \citep{huang2013multiclass}, sparsity \citep{sparseDWD}, non-linear kernels \citep{wang2019multicategory}, multi-way (i.e., tensor) data \citep{lyu2016discriminating}, and sample weighting for unbalanced data \citep{qiao2010weighted}.  These useful extensions modify the optimization task used for DWD, but remain model-free.  Resampling methods such as permutation testing \citep{wei2016direction}, cross-validation, and bootstrapping \citep{lyu2016discriminating} have been used with DWD to address inferential questions that involve uncertainty.  However, while various model-based versions of SVM have been proposed \citep{sollich2002bayesian, henao2014bayesian}, DWD is yet to be cast into a fully specified probability model.  

We show that DWD identifies the mode of a proper Bayesian posterior distribution. The corresponding density of the posterior distribution is a monotone function of the DWD objective.  The underlying model is given by a particular link function for the class probabilities and a shrinkage-inducing proper prior distribution on the model coefficients.  In addition to providing a model-based context for DWD, we demonstrate how this Bayesian framework can be used to extend and enhance present applications of DWD in four specific ways:
\begin{enumerate}
\item Posterior class inclusion probabilities can be used for ``soft" classification, and we find that these probabilities are well-calibrated across different contexts.
\item Posterior credible intervals can be used to visualize uncertainty in the coefficients and the sample scores (i.e., projections).
\item Semi-supervised analyses, in which only some of the class labels are observed in the training data, fit naturally into the Bayesian framework and can improve power.  
\item A penalty tuning parameter for the DWD objective can be subsumed within the Bayesian framework and given its own prior, to allow for model-based selection of this parameter.   
\end{enumerate} 
We describe a relatively efficient Markov chain Monte Carlo (MCMC) algorithm to simulate from the posterior distribution.  Alternatively, we show that the posterior is asymptotically multivariate normal; we derive the asymptotic mean and covariance, which are readily computable using existing software.  

In what follows, we introduce our formal framework and notation in Section~\ref{context}, formally describe the DWD objective in Section~\ref{DWD}, and then present and discuss the Bayesian model and its estimation in Sections \ref{model}, \ref{asymp.norm}, \ref{extensions}, and \ref{postcomp}. We present simulation results to illustrate and validate different aspects of the approach in Section \ref{sims}, and we present an application to subtype classification in breast cancer genomics in Section \ref{app}.   

\section{Framework and notation}
\label{context}
Throughout this article bold lower-case characters denote vectors, bold upper-case characters denote matrices, and greek characters denote unknown model parameters.  Probability density functions are denoted by $p(\cdot)$, and discrete probabilities by $P(\cdot)$.

 Let $\x_{i}$ be a $d$-dimensional vector and $y_i \in \{-1,1\}$ be a corresponding class indicator for sampled observations $i=1,\hdots,n$. Let $\X$ be the $d \times n$ matrix $\X = [\x_1 \; \x_2 \hdots \x_n ]$.  Let $\y$ be the $n$-dimensional vector of outcomes $\y=[y_1 \; y_2 \hdots y_n]$.   Our task is to predict the outcomes $y_i$ from the data $\x_i$.  

\section{Distance weighted discrimination}
\label{DWD}

 Distance weighted discrimination (DWD) identifies coefficients (weights) $\bbeta = [\beta_1 \; \beta_2 \hdots \beta_d]$ and an intercept $\beta_0 \in \mathbb{R}$ such that the linear combinations (scores) $\beta_0 + \x_{i}^T \bbeta$ discriminate the classes.   The optimization problem for DWD was originally formulated as 
\begin{equation}
\underset{\beta_0,\bbeta,\bm{\xi}}{\mbox{argmin}}\sum_{i=1}^n \frac{1}{r_i}+C\sum_{1}^{n} \xi_i, \text{ where } r_i=y_i(\beta_0+\x_i^T \bbeta)+\xi_i
\label{DWD_orig}
\end{equation}
subject to  $r_i>0$ and $\xi_i>0$ for $i=1,\hdots,n$ and $||\bbeta||_2^2 \leq 1$.  In practice the classification decision rule is determined by the sign of $\beta_0 + \x_{i}^T \bbeta$, and thus the $\xi_i$ can be interpreted as a penalty for misclassification.  The tuning parameter $C$ controls the relative weight of this misclassification penalty compared to the inverse distances from the separating hyperplane $1/r_i$. An appealing property of DWD is that its objective accounts for the projections of all the data observations, rather than just those at the margins. \cite{liu2011hard} show that the DWD objective \eqref{DWD_orig} is equivalent to the following ``loss+penalty" formulation
\begin{align}
\underset{\beta_0, \bbeta}{\mbox{argmin}} \frac{1}{n} \sum_{i=1}^n V\left(y_i(\beta_0 +\x_{i}^T \bbeta)\right)	+ \frac{\lambda}{2} ||\bbeta||^2_2, \label{dwd_obj} 
\end{align}
where $\lambda$ is a penalty tuning parameter, and 
\begin{align}
V(u) = \begin{cases}
 1-u &\text{ if } u \leq 1/2,\\
 1/(4u) &\text{ if } u > 1/2.
 \end{cases}
 \label{vu}
\end{align}
The objectives \eqref{DWD_orig} and \eqref{dwd_obj} are equivalent in that, with a one-to-one correspondence between the penalty parameters $\lambda$ and $C$, the estimated coefficients and resulting sample scores are proportional.

\section{Bayesian model}
\label{model}

Here we show that the DWD objective~\eqref{dwd_obj} gives the mode of a posterior distribution for $\{\beta_0, \bbeta\}$ under a Bayesian framework.  Let $\psi(\cdot)$ give the objective function that is minimized by DWD,
\[\psi(\beta_0,\bbeta,\y,\X,\lambda) = \frac{1}{n} \sum_{i=1}^n V\left(y_i(\beta_0 +\x_{i}^T \bbeta)\right)	+ \frac{\lambda}{2} ||\bbeta||^2_2.\]
The posterior density for $\{\beta_0, \bbeta\}$ is the following monotone transformation of $\psi$,
\begin{align}
p(\bbeta,\beta_0 \mid \X, \y, \lambda) &\propto e^{-n \psi(\beta_0,\bbeta,\y,\X,\lambda)} \nonumber \\
\label{bpost} &= \left[ \prod_{i=1}^n e^{-V\left(y_i(\beta_0 +\x_{i}^T \bbeta)\right)} \right]   e^{-(\lambda n/2) ||\bbeta||^2_2} 
\end{align}
Under general conditions \eqref{bpost} is integrable over $\{\beta_0, \bbeta\}$, and therefore defines a proper probability density function; we state this formally in Theorem \ref{thm1}.

\begin{theorem}
\label{thm1}
If $\X \in \mathbb{R}^{d \times n}$, $\lambda \geq 0$, and $\y \in \{-1,1\}^n$ where $y_i=-1$ for some $i$ and $y_j=1$ for some $j$, then $p(\bbeta,\beta_0 \mid \X, \y, \lambda)$ in \eqref{bpost} gives a proper probability density function over $\{\beta_0, \bbeta\}$.      	
\end{theorem}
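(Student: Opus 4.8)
Write $q(\beta_0,\bbeta)=e^{-n\psi(\beta_0,\bbeta,\y,\X,\lambda)}$ for the unnormalized density in \eqref{bpost}. Since $q$ is continuous and strictly positive, properness is exactly the statement $\int_{\mathbb{R}^{d+1}}q\,d\beta_0\,d\bbeta<\infty$, and because $q\ge 0$ we may compute this as an iterated integral in whichever order is convenient (Tonelli). The only feature of $V$ the argument needs is the elementary bound $V(u)\ge\max(0,-u)$ for all $u$: from \eqref{vu}, $1-u\ge\max(0,-u)$ when $u\le 1/2$, and $1/(4u)>0$ when $u>1/2$. Hence $0<e^{-V(u)}\le\min(1,e^{u})\le 1$, so in $q=\big[\prod_{i=1}^n e^{-V(y_i(\beta_0+\x_i^T\bbeta))}\big]\,e^{-(\lambda n/2)\|\bbeta\|_2^2}$ we are free to discard all but two of the $n$ loss factors.

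Consider first $\lambda>0$, the case of practical interest. Fix $i,j$ with $y_i=-1$ and $y_j=1$ (these exist by hypothesis, and necessarily $i\ne j$). Retaining only these two factors,
\[
q(\beta_0,\bbeta)\ \le\ e^{-(\lambda n/2)\|\bbeta\|_2^2}\;\min\!\big(1,e^{-(\beta_0+\x_i^T\bbeta)}\big)\;\min\!\big(1,e^{\,\beta_0+\x_j^T\bbeta}\big).
\]
I would integrate over $\beta_0$ first, with $\bbeta$ fixed: the substitution $w=\beta_0+\x_i^T\bbeta$ turns the inner integral into $\int_{\mathbb{R}}\min(1,e^{-w})\min(1,e^{w+c})\,dw$ with $c=\x_j^T\bbeta-\x_i^T\bbeta$, whose integrand is at most $1$ and decays like $e^{-w}$ as $w\to+\infty$ and like $e^{w}$ as $w\to-\infty$; a one-line estimate bounds it by $2+|c|\le 2+\|\x_j-\x_i\|_2\|\bbeta\|_2$. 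Therefore $\int_{\mathbb{R}}q\,d\beta_0\le(2+\|\x_j-\x_i\|_2\|\bbeta\|_2)\,e^{-(\lambda n/2)\|\bbeta\|_2^2}$, and integrating over $\bbeta\in\mathbb{R}^d$ is a finite Gaussian moment. This settles properness for every $\lambda>0$.

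The case $\lambda=0$ is where I expect the real difficulty, and it needs care beyond what is written. With no Gaussian factor, integrability can fail outright: if the linear map $(\beta_0,\bbeta)\mapsto(\beta_0+\x_1^T\bbeta,\dots,\beta_0+\x_n^T\bbeta)$ has nontrivial kernel --- automatic once $d\ge n$ --- then $q$ is constant along a full line and $\int q=\infty$. So for $\lambda=0$ one must add a non-degeneracy hypothesis, e.g.\ that no nonzero $(\beta_0,\bbeta)$ has $y_i(\beta_0+\x_i^T\bbeta)\ge 0$ for every $i$ (equivalently, the classes are not weakly linearly separated in any direction). Under that hypothesis, continuity together with compactness of the unit sphere in $\mathbb{R}^{d+1}$ furnishes a $\delta>0$ such that along every ray from the origin the quantity $\min_i y_i(\beta_0+\x_i^T\bbeta)$ is $\le -t\delta$ at radius $t$; that value is below $1/2$, so the associated $V$ is at least $1+t\delta$ and $q\le e^{-1-t\delta}$ at radius $t$, uniformly in direction. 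In polar coordinates, $\int_{\mathbb{R}^{d+1}}q$ is bounded by a constant (the surface area of the unit sphere) times $\int_0^\infty e^{-1-t\delta}t^{d}\,dt<\infty$. In short, apart from this extra hypothesis at the boundary $\lambda=0$, the theorem reduces to the single inequality $e^{-V(u)}\le\min(1,e^{u})$ plus a Gaussian --- or, for $\lambda=0$, an exponential-radial --- moment bound.
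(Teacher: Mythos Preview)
Your argument for $\lambda>0$ is correct and follows essentially the same architecture as the paper's proof: keep only two loss factors (one from each class), integrate $\beta_0$ first to obtain a bound that grows at most linearly in $\|\bbeta\|$, then absorb this with the Gaussian factor. Your device $e^{-V(u)}\le\min(1,e^{u})$ is a slightly cleaner way to control the tails than the paper's split of the $\beta_0$ integral at $\pm\|\X\|_1\|\bbeta\|_1$, but the two arguments are the same in substance and yield the same kind of bound.

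Your caution about $\lambda=0$ is well placed and in fact identifies a gap that the paper's own proof shares: the final step there evaluates $\int\|\bbeta\|_1\,e^{-(\lambda n/2)\|\bbeta\|_2^2}\,d\bbeta$ and $\int e^{-(\lambda n/2)\|\bbeta\|_2^2}\,d\bbeta$, both of which diverge when $\lambda=0$, so the paper tacitly needs $\lambda>0$ as well. Your counterexample (a nonzero $(\beta_0,\bbeta)$ in the kernel of the score map, automatic once $d\ge n$, or more generally any weakly separating direction) shows that properness can genuinely fail at $\lambda=0$ without an added non-degeneracy hypothesis; under your non-separability assumption the compactness/radial-decay argument you sketch is correct.
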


In what follows, we ``work backward" from this posterior to complete the specification of the model and discuss its implications.  

\subsection{Conditional distribution of $\y$}
\label{ydist}

Treating $\y$ as a discrete random vector, it follows directly from~\eqref{bpost} that its entries $y_i$ are conditionally independent and that 
\[P(y_i \mid \bbeta, \beta_0, \x_i) \propto P(y_i) e^{-V\left(y_n(\beta_0 +\x_{i}^T \bbeta)\right)}.\]
The conditional distribution for $y_i \in \{-1,1\}$ thus depends on its unconditional class probability $P(y_i=1)=1-P(y_i=-1)$, and its corresponding score $u_i := \beta_0 +\x_{i}^T \bbeta$:
\begin{align}
P(y_i = 1 \mid u_i) = \frac{P(y_i=1) e^{-V\left(u_i\right)}}{P(y_i=1) e^{-V\left(u_i\right)}+P(y_i={-1})e^{-V\left(-u_i\right)}}.	 \label{ydist}
\end{align}
Choice of $P(y_i=1)$ does not influence posterior inference for $\{\beta_0, \bbeta\}$ in  \eqref{bpost}, but it is nevertheless important for the calibration of the predictive model and is discussed further in Section~\ref{uneq.class}.  Given $P(y_i=1)$, \eqref{ydist} defines a link function from $u_i \in \mathbb{R}$ to a probability in $[0,1]$; this function is smooth and comparable to a probit or logit link. Figure~\ref{links} plots this function under equal class proportions, $P(y_i=1)=1/2$.  

\begin{figure}[!ht]
\centering
\includegraphics[width=0.9\textwidth]{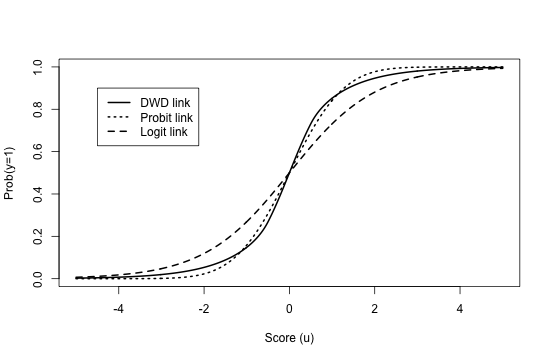}	
\caption{Class probability as a function of linear score under different links.}
\label{links}
\end{figure}

\subsection{Prior distribution of $\beta$}
\label{bdist}
Estimating $\beta_0$ and $\bbeta$ via maximizing the likelihood under model~\eqref{ydist} would perform poorly in the HDLSS context, analogous to the poor performance of unconstrained probit or logistic regression models.  However, the posterior distribution~\eqref{bpost} also implies a ``prior" distribution for $\{\beta_0,  \bbeta\}$ that is not conditional on $\y$:
\begin{align}
 p(\bbeta,\beta_0 \mid \X, \y, \lambda)  &\propto  P(\y \mid \bbeta, \beta_0, \X) \, p(\bbeta, \beta_0 \mid \X, \lambda) \nonumber \\
 \rightarrow  p(\bbeta, \beta_0 \mid \X, \lambda) & \propto A \cdot B \label{bprior}
\end{align}
where 
\begin{align} A = \prod_{i=1}^n \left( P(y_i=1)e^{-V\left(\beta_0 +\x_{n}^T \bbeta\right)} +P(y_i=-1)e^{-V\left(-\beta_0 -\x_{n}^T\bbeta\right)}\right) \label{termA} \end{align}
and
\begin{align} B = e^{-(\lambda N/2) ||\bbeta||^2_2}. \label{termB} \end{align}
This prior facilitates shrinkage in two important ways that improve performance in HDLSS settings.  
Term $B$~\eqref{termB} gives the kernel of a multivariate normal distribution for $\bbeta$, in which the $\beta_i$'s are independent with mean $0$ and variance $1/(\lambda N)$.  This derives from the $l_2$ penalty in~\eqref{dwd_obj}.  There is a vast literature on the connection between $l_2$-regularized regression (e.g., \emph{ridge regression}) and using Gaussian priors on the coefficients in a Bayesian framework, as mentioned in Section~\ref{intro}.  This shrinks the inferred coefficients toward $\0$, which improves performance in the HDLSS setting or in the presence of multicollinearity.  Term $A$ \eqref{termA} favors $\bbeta$ that correspond to directions in $\X$ with high variability or bimodality.  To illustrate, Figure~\ref{prior_fig} plots a single term of the product $A$ as a function of the DWD score $u_i = \beta_0+\x_i^T \bbeta$ when $P(y_i=1)=1/2$,
\begin{align} f(u_i)=e^{-V(u_i)}+e^{-V(-u_i)}. \label{prior_term} \end{align}
\begin{figure}[!ht]
\centering
\includegraphics[width=0.9\textwidth]{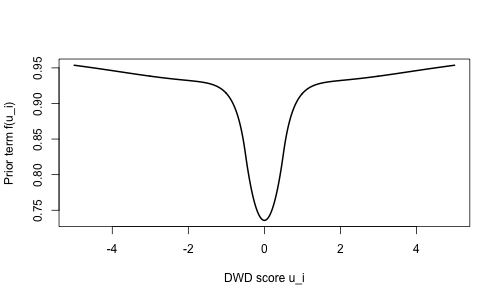}	
\caption{Prior term $f(u_i)$ \eqref{prior_term} as a function of DWD score $u_i$.}
\label{prior_fig}
\end{figure}
This term gives higher prior density to directions in $\X$ for which the corresponding scores $\{u_i\}_{i=1}^N$ are not concentrated near $0$, and are therefore more useful for discriminating a negative class from a positive class.     
  
The function $A \cdot B$ in~\eqref{bprior} defines a proper density for $\bbeta$ for any fixed intercept $\beta_0$, as stated in Theorem \ref{thm2}.
\begin{theorem}
\label{thm2}
If $\X \in \mathbb{R}^{d \times n}$, $\lambda > 0$, and $\beta_0 \in  \mathbb{R}$,  then \[p(\bbeta \mid \X, \beta_0, \lambda) \propto A \cdot B\]
gives a proper probability density function, where $A$ and $B$ are defined in~\eqref{termA} and~\eqref{termB}.   	
\end{theorem}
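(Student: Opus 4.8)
The plan is to show directly that $\int_{\mathbb{R}^d} A(\bbeta)\, B(\bbeta)\, d\bbeta$ is finite and strictly positive, so that $A \cdot B$ can be normalized to a proper density. The key structural observation is that, unlike Theorem~\ref{thm1} where $\lambda$ may equal $0$, here $\lambda > 0$ strictly, so the Gaussian factor $B$ by itself controls the tails and the only thing we need from $A$ is that it stays bounded.

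First I would establish that $A$ is bounded. From the definition \eqref{vu}, $V(u) = 1-u \ge 1/2$ when $u \le 1/2$, while $V(u) = 1/(4u) > 0$ when $u > 1/2$; in either case $V(u) > 0$ for every $u \in \mathbb{R}$, so $0 < e^{-V(u)} < 1$. Consequently each factor of the product $A$ in \eqref{termA} satisfies
\[
0 < P(y_i=1)\, e^{-V(\beta_0 + \x_i^T\bbeta)} + P(y_i=-1)\, e^{-V(-\beta_0 -\x_i^T\bbeta)} < P(y_i=1) + P(y_i=-1) = 1,
\]
where strict positivity uses that $P(y_i=1)$ and $P(y_i=-1)$ are nonnegative and sum to $1$, hence are not both zero. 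Taking the product over $i = 1,\dots,n$ gives $0 < A(\bbeta) \le 1$ for all $\bbeta \in \mathbb{R}^d$, with no condition needed on $\X$ or $\beta_0$.

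Next, since $\lambda > 0$, the factor $B(\bbeta) = e^{-(\lambda N/2)\|\bbeta\|_2^2}$ in \eqref{termB} is, up to a constant, the density of a multivariate normal distribution with mean $\0$ and covariance matrix $(\lambda N)^{-1}\I$, so $\int_{\mathbb{R}^d} B(\bbeta)\, d\bbeta = (2\pi/(\lambda N))^{d/2} < \infty$. Combining with the previous step, $A(\bbeta) B(\bbeta)$ is a continuous, strictly positive function dominated pointwise by the integrable function $B$, hence
\[
0 < \int_{\mathbb{R}^d} A(\bbeta)\, B(\bbeta)\, d\bbeta \le \int_{\mathbb{R}^d} B(\bbeta)\, d\bbeta < \infty,
\]
so $A \cdot B$ integrates to a finite positive constant, and dividing by that constant yields a proper probability density on $\mathbb{R}^d$.

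There is no substantial obstacle here: the argument is a one-line domination once the boundedness of $A$ is noted, and that rests only on the elementary fact that $V$ is strictly positive. The two points requiring a moment's care are (i) checking positivity of $V$ on both pieces of its domain in \eqref{vu}, and (ii) noting that $P(y_i=1)$ and $P(y_i=-1)$ cannot both vanish, which upgrades $A \ge 0$ to $A > 0$. I would also remark on the contrast with Theorem~\ref{thm1}: there the penalty factor can be absent ($\lambda = 0$), so integrability must instead come from the un-marginalized product $\prod_i e^{-V(y_i(\beta_0 + \x_i^T\bbeta))}$ growing in both directions of $\bbeta$-space, which is why that hypothesis requires both class labels to be present; no such requirement is needed for Theorem~\ref{thm2}.
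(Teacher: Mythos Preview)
Your proof is correct and follows essentially the same approach as the paper: bound $A$ above by $1$ using $V(u)>0$, then dominate $A\cdot B$ by the integrable Gaussian kernel $B$. If anything, you are slightly more careful than the paper in explicitly verifying strict positivity of the integral (the paper only states non-negativity), and your closing remark contrasting the hypotheses with Theorem~\ref{thm1} is a nice addition.
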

The prior \eqref{bprior} does not define a proper density for $\beta_0$.  Improper priors are commonly used for a variety of Bayesian applications, and still provide a valid framework for inference if the posterior is proper \citep{taraldsen2010improper,sun2001propriety}.  Nevertheless, it would be straightforward to modify the model to allow for shrinkage of $\beta_0$, if one desired it to have a proper probability density without conditioning on $\y$.            
  
 \subsection{Distribution of $X$}
\label{xdist}
No explicit probabilistic assumptions on $\X$ are needed for the validity of the posterior and prior distributions in the previous two sections.  However, it is informative to consider the broader implications of the model with respect to the distribution for $\X$.  Without loss of generality we set $\beta_0=0$ and assume $P(y_i=1)=0.5$.  Note that
\[p(\bbeta \mid \x_i, \lambda) = \left( e^{-V\left(\x_{i}^T \bbeta\right)}+e^{-V\left(-\x_{i}^T\bbeta\right)} \right) e^{-(\lambda/2) ||\bbeta||^2_2} / \phi(\x_i, \lambda), \]
where
\[\phi(\x_i, \lambda) = \int_{\bbeta}  \left( e^{-V\left(\x_{i}^T \bbeta\right)}+e^{-V\left(-\x_{i}^T\bbeta\right)} \right) e^{-(\lambda/2) ||\bbeta||^2_2} d \bbeta, \]
Given any marginal density for $\x$ (which may depend on $\lambda$), $\x_i \sim p(\x \mid \lambda)$, 
\begin{align*}
p(\bbeta, \x_i \mid \lambda) = p(\bbeta \mid \x_i, \lambda) p(\x_i), \\
p(\bbeta \mid \lambda) = \int p(\bbeta, \x_i \mid \lambda) d \x_i, \\
p(\x_i \mid \bbeta, \lambda) = p(\bbeta, \x_i \mid \lambda)/p(\bbeta \mid \lambda).
\end{align*}
Thus, a marginal prior for $\bbeta$ exists but it depends on the distribution $\x_i$.   
Moreover,
\begin{align*}
p(\x_i \mid \bbeta, \lambda) = p(\x_i \mid \bbeta, y_i=1, \lambda) + p(\x_i \mid \bbeta, y_i=-1, \lambda)  	
\end{align*}
implies that 
\[p(\x_i \mid \bbeta, y_i) \propto e^{-V(y_i \x_i^T \bbeta)} \phi(\x_i,\lambda) p(\x_i\mid \lambda) / \phi(\x_i,\lambda).\]
Thus, given any marginal distribution for the data $\x_i$, it is possible to obtain the conditional distribution of $\x_i$ given its class membership $\y_i$ and the DWD vector $\bbeta$.   

  \section{Asymptotic normality of $\beta$}
\label{asymp.norm}

It follows from the Bernstein-von Mises theorem (i.e., the ``Bayesian central limit theorem") \citep{ghosal1997review} that the posterior distribution for $\bbeta$ approximates a multivariate normal distribution as $n \rightarrow \infty$.  The limiting distribution can be derived precisely via a second-order Taylor expansion of the log of \eqref{bpost}, about the mode $\hat{\bbeta}$.  The resulting mean and covariance matrix are provided in Theorem~\ref{thm3}.
\begin{theorem}
\label{thm3}
If $\X \in \mathbb{R}^{d \times n}$, $\lambda > 0$, and $\y \in \{-1,1\}^n$, then  
\[p(\bbeta \mid \X, \y, \lambda, \beta_0) \approx \mbox{MVN}(\bbeta \mid \hat{\bbeta}, V_\beta)\]
as $n\rightarrow \infty$, where $\hat{\bbeta}$ is the DWD solution \eqref{dwd_obj}, and 
\[V_\beta = \left( \sum_{i \in S_{\hat{\bbeta}}} \frac{\x_i \x_i^T}{2(y_i (\hat{\beta}_0+\x_i^T \hat{\bbeta})^3} +n\lambda \mathbf{I} \right)^{-1}\] 	
where $S_{\hat{\bbeta}} = \{i : y_i (\hat{\beta}_0+\x_i^T \hat{\bbeta}) > 0.5\}$.
\end{theorem}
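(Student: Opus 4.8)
The plan is to carry out a Laplace (Bernstein--von Mises) expansion of the posterior \eqref{bpost} in $\bbeta$, with $\beta_0$ held fixed at the DWD intercept $\hat{\beta}_0$. Writing $u_i(\bbeta) = \hat{\beta}_0 + \x_i^T\bbeta$, the conditional log posterior is, up to an additive constant,
\[
\ell(\bbeta) \;=\; -\sum_{i=1}^n V\!\left(y_i u_i(\bbeta)\right) \;-\; \frac{n\lambda}{2}\,\|\bbeta\|_2^2 .
\]
Since $V$ is convex and $\frac{n\lambda}{2}\|\bbeta\|_2^2$ is $n\lambda$-strongly convex for $\lambda>0$, $-\ell$ is $n\lambda$-strongly convex, so its minimiser $\hat{\bbeta}$ is unique; it coincides with the $\bbeta$-component of the DWD solution \eqref{dwd_obj} and satisfies $\nabla_{\bbeta}\ell(\hat{\bbeta})=\0$. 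From here the route is standard: expand $\ell$ to second order about $\hat{\bbeta}$, show the quadratic term dominates the remainder on the $n^{-1/2}$ scale, exponentiate, and identify the Gaussian kernel.

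The key computation is the gradient and Hessian of $\ell$. From \eqref{vu}, $V$ is $C^1$ on $\mathbb{R}$ with $V'(u)=-1$ for $u\le 1/2$ and $V'(u)=-1/(4u^2)$ for $u>1/2$, and its second derivative exists off the single point $u=1/2$, equal to $0$ on $\{u<1/2\}$ and $1/(2u^3)$ on $\{u>1/2\}$. Hence $\nabla_{\bbeta}\ell(\bbeta) = -\sum_{i} V'(y_i u_i)\,y_i\,\x_i - n\lambda\,\bbeta$, and at any $\bbeta$ with no $y_i u_i$ equal to $1/2$,
\[
-\nabla^2_{\bbeta}\ell(\bbeta) \;=\; \sum_{i:\, y_i u_i(\bbeta) > 1/2} \frac{\x_i\x_i^T}{2\,(y_i u_i(\bbeta))^3} \;+\; n\lambda\,\mathbf{I}.
\]
Evaluated at $\bbeta=\hat{\bbeta}$ this is exactly $V_\beta^{-1}$, by the definition of $S_{\hat{\bbeta}}$; moreover $V_\beta^{-1}-n\lambda\,\mathbf{I}$ is positive semidefinite, so $V_\beta^{-1}$ is positive definite and $V_\beta$ is well-defined. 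Generically no $y_i u_i(\hat{\bbeta})$ lands on the kink $1/2$, so $\ell$ is twice continuously differentiable near $\hat{\bbeta}$ and $S_{\hat{\bbeta}}$ is locally constant there.

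Combining these, $\ell(\bbeta) = \ell(\hat{\bbeta}) - \tfrac12(\bbeta-\hat{\bbeta})^T V_\beta^{-1}(\bbeta-\hat{\bbeta}) + R(\bbeta)$, the linear term vanishing because $\nabla_{\bbeta}\ell(\hat{\bbeta})=\0$. On the piece $\{u>1/2\}$ one has $V'''(u) = -3/(2u^4)$, bounded in absolute value by $24$ there and identically $0$ on $\{u<1/2\}$, so (with mild control on $\max_i\|\x_i\|_2$) $R(\bbeta)=O(n\,\|\bbeta-\hat{\bbeta}\|_2^3)$, whereas the quadratic form is at least $n\lambda\,\|\bbeta-\hat{\bbeta}\|_2^2$. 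On the scale $\|\bbeta-\hat{\bbeta}\|_2 = O(n^{-1/2})$ the remainder is therefore $o(1)$ relative to the quadratic term. Exponentiating, the mass of $e^{\ell}$ outside an $n^{-1/2}$-neighbourhood of $\hat{\bbeta}$ is exponentially negligible by the $n\lambda$-strong convexity of $-\ell$, while inside that neighbourhood $e^{\ell(\bbeta)}$ is, to leading order, proportional to $\exp\{-\tfrac12(\bbeta-\hat{\bbeta})^T V_\beta^{-1}(\bbeta-\hat{\bbeta})\}$. Renormalising gives $p(\bbeta\mid\X,\y,\lambda,\beta_0)\to\mbox{MVN}(\hat{\bbeta},V_\beta)$ as $n\to\infty$, which is the claim.

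The main obstacle is making the Bernstein--von Mises step rigorous, since \eqref{bpost} is a pseudo-posterior rather than a genuine likelihood-times-prior and $V$ fails to be $C^2$ at $u=1/2$. One must check that the information matrix $V_\beta^{-1}$ grows at the $\Theta(n)$ rate needed for posterior concentration --- which it does, being bounded below by $n\lambda\,\mathbf{I}$ --- that the mode stays away from the kink set $\{i:\, y_i u_i(\hat{\bbeta})=1/2\}$ so the local smoothness used in the remainder bound holds, and that the observations with $y_i u_i$ near $0$ (which sit on the linear piece of $V$ and hence contribute nothing to the Hessian) do not spoil the concentration; this last point is again secured by the strongly convex ridge term. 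These are the delicate points; the gradient and Hessian bookkeeping is routine.
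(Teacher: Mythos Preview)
Your proposal is correct and follows exactly the route the paper indicates: the paper does not give a full proof of this theorem but simply states that the result follows from the Bernstein--von Mises theorem via a second-order Taylor expansion of the log of \eqref{bpost} about the mode $\hat{\bbeta}$, which is precisely the Laplace argument you carry out. Your explicit computation of $V''(u)=1/(2u^3)$ on $\{u>1/2\}$ and $V''(u)=0$ on $\{u<1/2\}$ yielding $-\nabla^2_{\bbeta}\ell(\hat{\bbeta})=V_\beta^{-1}$, together with the strong-convexity and remainder bounds, supplies the details the paper leaves implicit.
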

Note that this asymptotic approximation always exists and is directly computable given the DWD solution $\{\hat{\beta_0},\hat{\bbeta} \}$.

\section{Extensions}
\label{extensions}

\subsection{Marginal class probabilities}
\label{uneq.class}

The prior class probabilities $P(y_i=1)$ and $P(y_i=-1)$ in \eqref{ydist} do not affect the posterior for $\{\beta_0, \bbeta\}$ in~\eqref{bpost}.  Thus, they need not be specified for tasks that do not require a predictive model for $y_i$, such as exploratory visualization or batch adjustment \citep{huang2012r}.  However, their specification is important to obtain appropriately calibrated probabilities for classification.  In particular, they must be considered carefully when the proportions in the training data differ from that of the target population.  The ability to adjust the marginal class probabilities for out-of-sample prediction is useful, as the decision boundary for DWD ($\beta_0$) is sensitive to the class proportions used for training \citep{qiao2010weighted,qiao2015flexible}.
We let each observation have the same class membership probability \emph{a priori}:
\[P(y_i=1)=P_1  \; \text{ for } i=1,\hdots,n,\]
and by default we set $P_1=0.5$.  In other situations $P_1$ may be known and fixed at another value, or estimated as the sample proportion from class $1$. Alternatively, one can give $P_1$ its own prior (e.g., $P_1 \sim \text{Uniform}(0,1)$) and infer it with the rest of the Bayesian model; such an approach may be useful in (e.g.,) situation for which the $y_i$'s are only partially observed and the population class proportions are of interest. 

\subsection{Inference for $\lambda$}
\label{lambda.sec}

Rather than fixing the penalty parameter $\lambda$ \emph{a priori}, we recommend inferring its value within the Bayesian framework.  For a given prior density $\lambda \sim p(\lambda)$, the conditional density for $\lambda$ is 
\begin{align}
\begin{split}
	p(\lambda \mid \X,\y,\bbeta,\beta_0) &\propto p(\lambda) P(\y \mid \X, \bbeta, \beta_0) p(\bbeta,\X \mid \beta_0, \lambda)  \\
&\propto \prod_{i=1}^n p(\lambda) \left( P_1e^{-V\left(\beta_0 +\x_{i}^T \bbeta\right)} +(1-P_1)e^{-V\left(-\beta_0 -\x_{i}^T\bbeta\right)}\right)/ \phi(\X,\beta_0,\lambda) 
\end{split}
\label{lambda.cond}
\end{align}

where
\begin{align}\phi(\X, \lambda,\beta_0) = \int_{\bbeta}  \prod_{i=1}^n \left( P_1 e^{-V\left(\beta_0+\x_{i}^T \bbeta\right)}+(1-P_1)e^{-V\left(-\beta_0-\x_{i}^T\bbeta\right)} \right) e^{-(\lambda/2) ||\bbeta||^2_2} d \bbeta.	 \label{phi.label}
\end{align}

In practice we use a prior for $\lambda$ that is uniform over a large range, by default, $\lambda \sim \text{Uniform}(1/128,128)$.  The impact of $\lambda$ depends on the scale of $\X$ and the number of observations $n$, so the prior may need to be suitably modified for other scenarios.

\subsection{Semi-supervised learning}
\label{semisup}

The conditional distribution of $\bbeta$ on $\X$ in \eqref{bprior} is not flat, and therefore observations for which $y_i$ are not observed can still inform $\bbeta$.  Consider the context in which $n_o$ observations are fully observed $\{(\x_i,\y_i)\}_{i=1}^{n_o}$,  $n_u$ have missing outcome $\{(\x_i^u,y_i)\}_{i=1}^{n_u}$, and $n=n_o+n_u$.  The full posterior distribution for $\{\beta_0,\bbeta\}$ is
\begin{align*}
p(\bbeta,\beta_0 \mid \X, \y, \lambda) &\propto \left[ \prod_{i=1}^{n_u} \left( P_1e^{-V\left(\beta_0 +\x_{n}^T \bbeta\right)} +(1-P_1)e^{-V\left(-\beta_0 -\x_{n}^T\bbeta\right)}\right) \right] \\ & \; \; \; \; \; \; \left[ \prod_{i=1}^{n_o} e^{-V\left(y_i(\beta_0 +\x_{i}^T \bbeta)\right)} \right]   e^{-(\lambda n/2) ||\bbeta||^2_2}.   
\end{align*}
This full posterior can be used for semi-supervised learning.


\section{Posterior approximation}
\label{postcomp}

\subsection{MCMC algorithm, fixed $\lambda$}
\label{mh.alg}

To simulate from the posterior distribution for $\{\beta,\bbeta_0\}$ \eqref{bpost}, we implement a ``Metropolis-in-Gibbs" algorithm \citep{carlin2008bayesian}.  That is, we iteratively draw each $\beta_i$ from its conditional distribution given the current values of the rest of the parameters $\bbeta[-i]$, $p(\beta_i \mid \bbeta[-i], \X, \y, \lambda)$; each conditional draw is accomplished via a Metropolis sub-step.  For the Metropolis step, we generate proposals from a normal distribution with mean given by the previous value for $\beta_i$ and a variance that is adaptively scaled over the course of the algorithm.  We initialize the Markov chain at the posterior mode, which can be identified using existing software for DWD;  we use the {\tt sdwd} package for this initialization.  

\subsection{Inference for $\lambda$}
\label{lambda.inf}

Inference for $\lambda$ can be accomplished by iteratively simulating from its full conditional distribution \eqref{lambda.cond} to update its value within the Gibbs sampling algorithm in Section~\ref{mh.alg}.  However, this is not straightforward, because the integral $\phi(\X, \beta_0, \lambda)$ \eqref{phi.label} does not have a closed form.  In our implementation, we first estimate $\phi(\X,\lambda,\beta_0)$ over a grid of values $\{\lambda_j:j=1,\hdots,J\}$ that span the range of the prior for $\lambda$.  For this, we utilize the fact that $e^{-(\lambda n/2) ||\bbeta||^2_2}$ resembles a Gaussian kernel to approximate the integral via Monte Carlo with simulated random normal vectors.  Given $\bbeta^{(t)}\sim \mbox{MVN}(\mathbf{0},1/(\lambda n) \I)$ for $t=1,\hdots,T$,
       \[\phi(\X,\lambda,\beta_0) \approx \left(\frac{2 \pi}{n \lambda}\right)^{p/2} \frac{1}{T} \sum_{t=1}^T   \prod_{i=1}^n \left( P_1 e^{-V\left(\beta_0+\x_{i}^T \bbeta^{(t)}\right)}+(1-P_1)e^{-V\left(-\beta_0-\x_{i}^T\bbeta^{(t)}\right)} \right).\]
      Using this approximation, we estimate $\hat{\phi}(\X,\lambda_j,\beta_0)$ for each $j$, and then linearly interpolate on the log-scale to approximate the full function $\hat{\phi}(\X,\lambda,\beta_0)$ over the range of the prior for $\lambda$.  Posterior estimation then proceeds as in Section \ref{mh.alg}, with an additional Metropolis sub-step to update the value of $\lambda$.  The full algorithm for posterior sampling is given in Appendix~\ref{alg}.
      
  \subsection{Computational considerations}
  \label{computing}
  
  In practice, we find that $1000$ MCMC cycles yield appropriate coverage of the posterior when $\lambda$ is fixed (see Section~\ref{genmodsim}) and $10000$ are sufficient when $\lambda$ is inferred.  On a single CPU, $10000$ MCMC iterations takes approximately $5$ minutes for a dataset of size $\X: 500 \times 500$ (comparable to the data considered in Section~\ref{app}) and computing time scales linearly with the dimension $d$.  Thus, full posterior inference via MCMC is feasible for most applications (with patience), but computing time can be a limitation.        

\section{Simulations}
\label{sims}

We consider three distinct simulation scenarios, and multiple conditions under each scenario, to illustrate and validate several aspects of Bayesian DWD.  

In Section~\ref{genmodsim} we simulate from a data generating scheme that matches the assumed prior and likelihood exactly.  Under this scenario, we confirm that posterior inferences using the MCMC algorithms in Section~\ref{postcomp} are appropriately calibrated regardless of the marginal distribution for $\X$. We compare to results using the normal approximation in Section~\ref{asymp.norm} and a bootstrapping approach.

In Section~\ref{mvn.simulation} we consider a scenario in which the two classes are defined by different multivariate normal distributions.  Under this scenario we demonstrate the robustness of Bayesian DWD to its model assumptions, assess the sensitivity of results to the tuning parameter $\lambda$, and validate model-based inference for $\lambda$.

In Section~\ref{semisup.sim} we consider different scenarios for which not all of the outcomes are observed to illustrate potential benefits and limitations of the semi-supervised approach.      

\subsection{Assumed model simulation}
\label{genmodsim}

Here we fix $\beta_0=0$ and simulate data from the assumed model as follows:
\begin{enumerate} 
\item Generate $\X: d \times n$ according to a specified probability distribution,
\item Generate $\bbeta_{\text{true}}$ from its conditional prior $\eqref{bprior}$ (via a Metropolis-Hastings algorithm), given $\lambda$
\item For $i=1,\hdots,n$, compute the score $u_i=\x_i^T \bbeta_{\text{true}}$ and generate $y_i$ according to its class probabilities defined by \eqref{ydist}.
\end{enumerate}

As manipulated conditions, we consider three values for $\lambda$ ($\lambda=0.1,1,$ or $10$), two configurations of $d$ and $n$ ($\X: 20 \times 100$ or $\X: 100 \times 20$), and three different probability distributions for $\X$: (1) entries are simulated independently from a Uniform$(-1,1)$ distribution, (2) entries are simulated independently from an Exponential$(1)$ distribution centered to have mean $0$, and (3) each column $\x_i$ is simulated from the following mixture of multivariate normal distributions:
\begin{align}x_i \sim \begin{cases}\mbox{MVN}(\bm{\mu}_0, \I) \text { with probability } 1/2 \\
 \mbox{MVN}(\bm{\mu}_1, \I) \text { with probability } 1/2,\end{cases} \label{bimod.eq} \end{align}
 where the entries of $\bm{\mu}_0$ and $\bm{\mu}_1$ are generated independently from a N$(0,0.5)$ distribution.  
 
For each set of manipulated conditions, we generate $100$ datasets $\{\X, \bbeta_{\text{true}}, \y\}$ under the assumed model and approximate the posterior distribution for $\bbeta$ using the MCMC algorithm in Section~\ref{mh.alg}. We construct 95\% Bayesian credible intervals for $\bbeta_{\text{true}}$ and the scores $\mathbf{u}= \X^T \bbeta_{\text{true}}$ based on the posterior samples.  We also consider credible intervals constructed using the posterior mode and the asymptotic approximation in Theorem~\ref{thm3} (i.e., the CLT approximation), and confidence intervals based on the percentiles of $200$ non-parametric bootstrap resamples as in \citet{lyu2016discriminating}. 
   
 Table~\ref{tab.cov} gives the average coverage rates for the true values under each set of conditions and the three different approaches to inference.  The credible intervals using MCMC all have nominal coverage rates, which validates the algorithm.  The CLT approximation gives appropriate coverage in several instances, even when $p>n$.  However, it performs less well when $\lambda$ is smaller and has relatively lower coverage rates for the scores $\mathbf{u}= \X^T \bbeta_{\text{true}}$.  The bootstrap percentile approach yields more narrow intervals and tends to undercover in all cases.  As a representative example, Figure~\ref{fig:cov1} plots the mode and 95\% intervals for the coefficients for an instance under the Uniform scenario with $\lambda=0.1$, $n=100$ and $p=20$; this shows close agreement between the MCMC and CLT intervals, and the bootstrap intervals are universally more narrow.
 
To assess the calibration of estimated class probabilities, we generate a new set of $1000$ observations under the true model as a ``test" set for each instance of the simulation.  We compute the class probabilities for each test case, given the posterior fit to the $n=20$ or $n=100$ training cases.  We group the estimated probabilities into narrow bins of length $0.02$ ($[0,0.02), [0.02,0.04)$, etc.) and consider the proportion of cases with $y=1$ in each bin.  Figure~\ref{fig:calib} plots these proportions, aggregated across all conditions, vs.\ the midpoint of each bin.  Using the posterior mean, the empirical proportions match the estimated probabilities perfectly, confirming that the model is well-calibrated and again validating the MCMC procedure. Using the posterior mode, the relationship deviates slightly but still tends to provide a reasonable fit in most cases.   
 
\begin{table}
\caption{\label{tab.cov} Coverage rates for model coefficients ($\beta$) and scores ($u$) using 95\% credible intervals under MCMC sampling from the posterior, the asymptotic normal approximation (CLT), and bootstrapping (BOOT).}
\centering
\begin{tabular}{cccc|cc@{\hskip 0.3in}cc@{\hskip 0.3in}cc}
  \hline
  &&& &\multicolumn{2}{c@{\hskip 0.3in}}{MCMC} & \multicolumn{2}{c@{\hskip 0.3in}}{CLT} & \multicolumn{2}{c}{BOOT}  \\
Distribution& $n$&$d$&$\lambda$& $\beta$ &$u$ & \;\;$\beta$ & $u$ &$\beta$&$u$ \\ 
  \hline
Uniform &$20$ & $100$ & $0.1$ & 0.94 & 0.94 & 0.95 & \textbf{0.85} & \textbf{0.16} & \textbf{0.42} \\ 
 Uniform & $20$ & $100$ & $1$ & 0.94 & 0.94 & 0.95 & \textbf{0.85} & \textbf{0.23} & \textbf{0.50} \\ 
  Uniform &$20$ & $100$ & $10$ & 0.94 & 0.94 & 0.95 & 0.96 & \textbf{0.20} & \textbf{0.45} \\ 
 Uniform &  $100$ & $20$ & $0.1$ & 0.94 & 0.94 & 0.92 & 0.85 & \textbf{0.46} & \textbf{0.55} \\ 
 Uniform & $100$ & $20$ & $1$ & 0.95 & 0.95 & 0.95 & 0.95 & \textbf{0.53} & \textbf{0.65} \\ 
Uniform &  $100$ & $20$ & $10$& 0.94 & 0.94 & 0.95 & 0.95 & \textbf{0.24} & \textbf{0.32} \\
  \hline 
 Exponential & $20$ & $100$ & $0.1$& 0.96 & 0.94 & \textbf{0.92} & \textbf{0.91} & \textbf{0.61} & \textbf{0.69} \\ 
  Exponential & $20$ & $100$ & $1$ & 0.95 & 0.95 & 0.96 & 0.95 & \textbf{0.62} & \textbf{0.71} \\ 
  Exponential & $20$ & $100$ & $10$ & 0.94 & 0.94 & 0.96 & 0.96 & \textbf{0.34} & \textbf{0.41} \\
  Exponential & $100$ & $20$ & $0.1$&  0.94 & 0.95 & \textbf{0.92} & \textbf{0.85} & \textbf{0.39} & \textbf{0.49} \\ 
  Exponential & $100$ & $20$ & $0.1$& 0.94 & 0.95 & 0.94 & \textbf{0.91} & \textbf{0.53} & \textbf{0.63} \\ 
  Exponential & $100$ & $20$ & $1$&  0.94 & 0.94 & 0.95 & 0.96 & \textbf{0.40} & \textbf{0.52} \\ 
   \hline 
    Bimodal & $20$ & $100$ & $10$& 0.95 & 0.96 & 0.94 & \textbf{0.91} & \textbf{0.61} & \textbf{0.66} \\ 
  Bimodal & $20$ & $100$ & $1$ & 0.94 & 0.95 & 0.96 & 0.95 & \textbf{0.66} & \textbf{0.73} \\ 
  Bimodal & $20$ & $100$ & $10$ & 0.93 & 0.94 & 0.96 & 0.96 & \textbf{0.39} & \textbf{0.46} \\ 
  Bimodal & $100$ & $20$ & $0.1$& 0.94 & 0.94 & \textbf{0.92} & \textbf{0.85} & \textbf{0.37} & \textbf{0.47} \\ 
  Bimodal & $100$ & $20$ & $1$& 0.95 & 0.95 & 0.94 & \textbf{0.90} & \textbf{0.51} & \textbf{0.60} \\ 
  Bimodal & $100$ & $20$ & $10$& 0.94 & 0.95 & 0.96 & 0.96 & \textbf{0.40} & \textbf{0.61} \\ 
 \hline
\end{tabular}
\end{table}

\begin{figure}[!ht]
\centering
\includegraphics[width=1\textwidth]{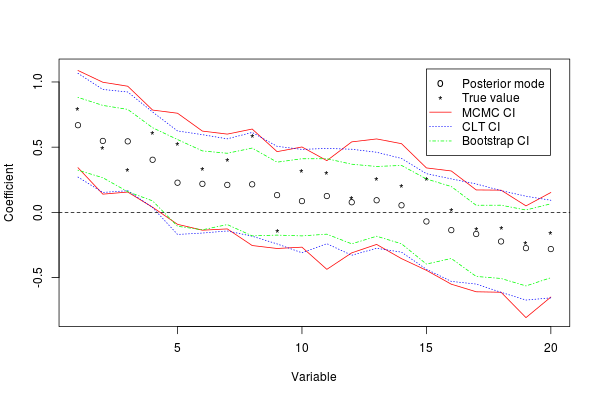}	
\caption{Inferential results and true values for the coefficients $\beta$ for a dataset generated under the Uniform scenario when $\lambda=0.1$, $n=100$ and $p=20$.}
\label{fig:cov1}
\end{figure}

\begin{figure}[!ht]
\centering
\includegraphics[width=1\textwidth]{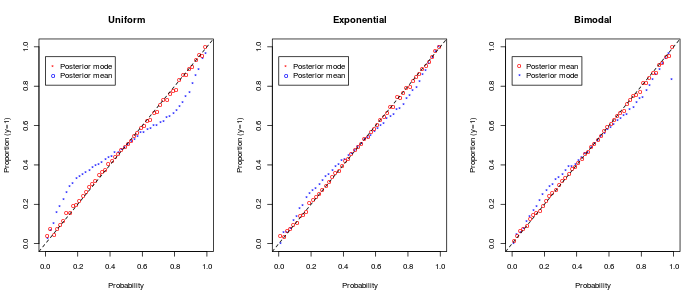}	
\caption{Estimated probabilities vs.\ observed proportions on test observations, aggregated across all conditions, for different simulation scenarios using the posterior mean or mode.}
\label{fig:calib}
\end{figure}

\subsection{Two-class multivariate normal simulation}
\label{mvn.simulation}

Here we generate data according to a two-class multivariate normal distribution, in which each class has a different mean vector, as follows:
\begin{enumerate}
\item Generate $d$-dimensional mean vectors $\bm{\mu}_0$ and $\bm{\mu}_1$, with independent entries from a N$(0,\tau^2)$ distribution.
\item Set $y_i=-1$ for $i=1,\hdots,n/2$ and $y_i=1$ for $i=n/2+1,\hdots,n$.
\item For $i=1,\hdots,n$, generate $\x_i$ via \[x_i \sim \begin{cases}\mbox{MVN}(\bm{\mu}_0, \I) \text { if } y_0=-1 \\
 \mbox{MVN}(\bm{\mu}_1, \I) \text { if } y_i=1 .\end{cases}\]
\end{enumerate}
Note that this scenario differs from the `Bimodal' case in Section~\ref{genmodsim} \eqref{bimod.eq}; for that scenario, membership in a given mixture component (defined by $\bm{\mu}_0$ and $\bm{\mu}_1$) does not necessarily correspond to the supervised class labels $y_i$.  The present scenario does not precisely match our assumed model.  The ``true" (i.e., oracle) probabilistic link between $y_i$ and $\x_i$ in this scenario corresponds to a probit link with coefficients that are proportional to the mean difference vector $\bm{\mu}_0-\bm{\mu}_1$; as shown in Figure~\ref{links} the link for Bayesian DWD is not a probit, but it is close. 

 We fix $n=100$ and consider different levels of signal ($\tau=0,0.1,0.2,0.5$) and different dimensions ($d=10,100,1000$) as manipulated conditions.  For each instance, we approximate the posterior separately with $\lambda$ fixed at each of $15$ possible values, that range from $1/128$ to $128$ and are equally spaced on a log scale.  For each case we compute the Kullback-Leibler divergence (KL-divergence) of the estimated class probabilities from the oracle model. For each case we also generate a test set of $5000$ samples and evaluate (1) the misclassification rate on the test samples using a posterior predictive probability of $0.5$ as a threshold, and the mean squared error (MSE) of the predicted class probabilities for class 1 ($p_i$) from the true class memberships as 
 \[\text{MSE}=\frac{1}{n_{\text{test}}}\sum_{i=1}^{n_{\text{test}}} (1-p_i)^2 \mathbbm{1}_{\{y_i=-1\}}+p_i^2 \mathbbm{1}_{\{y_i=1\}}.\]
Figure~\ref{fig:lamchoice} plots these metrics as a function of $\lambda$ for the different conditions.  In general, lower values of $\lambda$ perform better when the distinguishing signal is stronger.  This is intuitive, as $\lambda$ can be considered an inverse scale parameter for $\bbeta$, and larger magnitudes for $\bbeta$ correspond to probabilities closer to $0$ and $1$.  The misclassification rate is robust to the value of $\lambda$, but its specification is important to achieve well-calibrated probabilities.

\begin{figure}[!ht]
\centering
\includegraphics[width=1\textwidth]{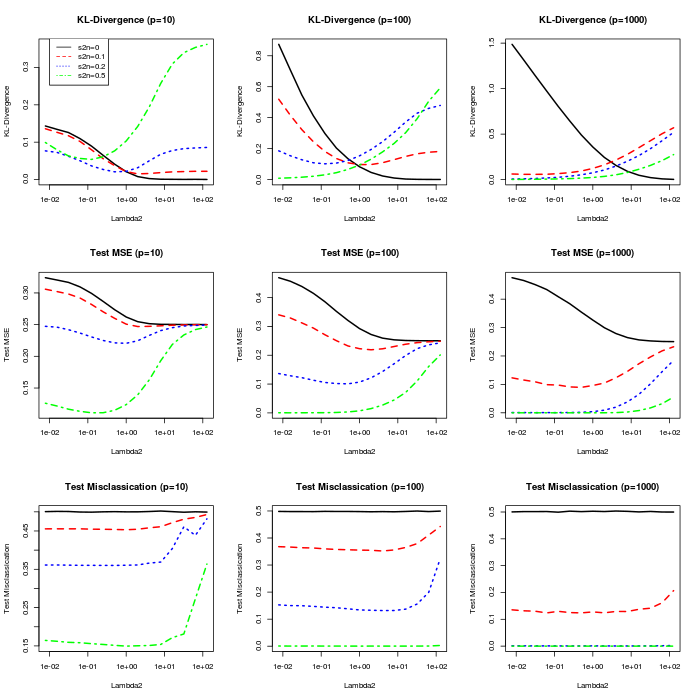}	
\caption{Different performance metrics (KL-divergence, test MSE, and test misclassification rate) are shown as a function of $\lambda$ used for estimation under different conditions.}
\label{fig:lamchoice}
\end{figure}

 Motivated by the sensitivity of the estimated probabilities to $\lambda$, we also approximate the posterior with a uniform prior on $\lambda$ as in Section~\ref{lambda.sec}.  Table~\ref{tab:lamestperm} summarizes the posterior mean for $\lambda$ under each set of conditions, and compares the performance of the model with uniform prior on $\lambda$ with that of the best performing fixed value of $\lambda$ for each metric.  The model with a uniform prior performs comparatively well across the different scenarios, and the posterior mean for $\lambda$ appropriately ranges from very high values when the signal is nonexistent to low values when the signal is strong.  However, for the conditions with $p=1000$ and strong signal the posterior mean is substantially larger than the best performing values of $\lambda$; this is likely because the model fits well and perfectly classifies the data with probabilities close to $0$ or $1$ for a wide range of $\lambda$ values (see Figure~\ref{fig:lamchoice}), so there is not strong evidence to distinguish between different values of $\lambda$ in this range.     

\begin{table}
\caption{\label{tab:lamestperm} Comparison of model performance when $\lambda$ is inferred with a uniform prior ($\hat{\lambda}$) vs.\ the best performing fixed value of $\lambda$ based on KL-divergence ($\lambda_{\text{KL}}$) or MSE $\lambda_{\text{MSE}}$.  Results for each set of condition are averages over $10$ replications.}
\centering
\begin{tabular}{cc|ccccccc}
  \hline
 $d$ & s2n ($\tau)$& $\hat{\lambda}$ & $\lambda_{\text{KL}}$ & $\lambda_{\text{MSE}}$ & KL($\hat{\lambda}$) & KL($\lambda_{\text{KL}}$)  & MSE($\hat{\lambda}$)  & MSE($\lambda_{\text{MSE}}$)  \\ 
  \hline
  \hline
$10$ & $0$ & 70.21 & 128.00 & 128.00 & 0.00 & 0.00 & 0.25 & 0.25 \\ 
 $10$ & $0.1$ & 70.82 & 2.00 & 2.00 & 0.02 & 0.02 & 0.25 & 0.25 \\ 
 $10$ & $0.2$ & 31.27 & 0.50 & 1.00 & 0.06 & 0.02 & 0.24 & 0.22 \\ 
$10$ &  $0.5$ & 0.10 & 0.12 & 0.12 & 0.06 & 0.05 & 0.11 & 0.11 \\ 
   \hline
100&$0$ & 74.56 & 128.00 & 64.00 & 0.00 & 0.00 & 0.25 & 0.25 \\ 
 100& $0.1$ & 22.55 & 1.00 & 2.00 & 0.15 & 0.10 & 0.24 & 0.22 \\ 
 100& $0.2$ & 0.14 & 0.12 & 0.50 & 0.11 & 0.10 & 0.11 & 0.10 \\ 
 100& $0.5$ & 0.11 & 0.01 & 0.02 & 0.03 & 0.01 & 0.00 & 0.00 \\ 
   \hline
   1000 & $0$& 44.79 & 128.00 & 128.00 & 0.02 & 0.00 & 0.25 & 0.25 \\ 
  1000 & $0.1$ & 3.11 & 0.03 & 0.50 & 0.21 & 0.06 & 0.11 & 0.09 \\ 
  1000 & $0.2$ & 0.63 & 0.01 & 0.03 & 0.07 & 0.01 & 0.00 & 0.00 \\ 
  1000 & $0.5$ & 1.82 & 0.01 & 0.01 & 0.04 & 0.00 & 0.00 & 0.00 \\ 
 \hline 
\end{tabular}
\end{table}

In Figure~\ref{fig:calibGauss} we consider the overall calibration of the fitted probabilities across the different conditions, as in Section~\ref{genmodsim}.  This shows clearly that fixing $\lambda$ at a large value ($\lambda=128$) can yield overly conservative estimated probabilities and a relatively small value ($\lambda=1/128$) can yield overly confident probabilities.  The model in which $\lambda$ is inferred with a uniform prior is mostly well-calibrated but does show some deviations from the estimated probabilities and empirical proportions.  These deviations are likely in part due to the fact that the assumed probabilistic link does not perfectly match the true generative model.  

\begin{figure}[!ht]
\centering
\includegraphics[width=0.8\textwidth]{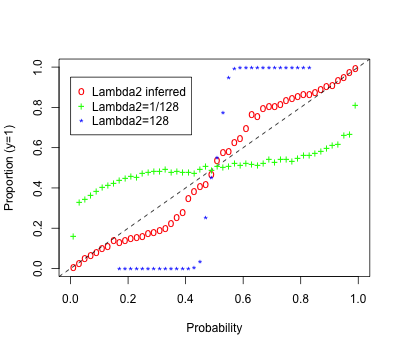}	
\caption{Estimated probabilities vs.\ observed proportions on test observations, aggregated across all conditions, for $\lambda=1/128$, $\lambda=128$, or inferring $\lambda$ with a uniform prior.}
\label{fig:calibGauss}
\end{figure}

\subsection{Semi-supervised simulation}
\label{semisup.sim}

Here we generate data in which some of the class labels $y_i$ may be unknown, to illustrate a semi-supervised approach.  We consider two scenarios.  For the first scenario, the observations $\x_i$ are generated exactly as in Section~\ref{mvn.simulation} with $d=100$ and $\tau=0.3$.  For the second scenario, $d=100$ and entries of each $\x_i$ are generated independently from a $N(0,1)$ distribution; then class labels $y_i$ are defined deterministically by the sign of $\x_i^T \bbeta_{\text{sep}}$, where the entries of $\bbeta_{\text{sep}}$ are generated from a standard normal distribution.  The key difference between these two scenarios, for our purpose, is that the first scenario (bimodal) has an underlying cluster structure that distinguishes the two classes whereas the second (unimodal) does not. So, we expect to improve performance by considering unlabeled data for the bimodal scenario but not the unimodal scenario.

As manipulated conditions, we vary the number of observations with $y_i$ observed ($n_o=10, 20, 100, 1000)$ and with $y_i$ unobserved ($n_u=0, 10, 20, 100, 1000$) in our training set, for each of the bimodal and unimodal scenarios.  For each combination of conditions, we generate $20$ datasets, use MCMC to approximate the posterior, and consider the resulting misclassification rates for a test set.  The average test misclassification rates under each set of conditions are shown in Table~\ref{tab:semisup}.  This shows that for the bimodal scenario a semi-supervised analysis with a large number of unlabeled observations can improve performance dramatically; the misclassification rates for $n_o=10$ observed and $n_u=1000$ unobserved are comparable to that with  $n_o=1000$ observed.  As expected, including data without observed class labels does not help in the unimodal scenario; however, it does not lead to a substantial decrease in performance.  

\begin{table}
\caption{\label{tab:semisup} Test misclassification rates using Bayesian DWD with differing number of observations with $y_i$ observed ($n_o$) or unobserved ($n_u$).  }
\centering
\begin{tabular}{cc|ccccc}
  \hline
   & & \multicolumn{5}{c}{$N$ unobserved ($n_u$)} \\
 Scenario & $N$ observed ($n_o$) & $0$ & $10$ & $20$ & $100$ & $1000$  \\ 
  \hline
  \hline
Bimodal & $10$ & 0.164 & 0.138 & 0.145 & 0.081 & 0.019 \\ 
 Bimodal & $20$ & 0.077 & 0.077 & 0.080 & 0.053 & 0.014 \\ 
 Bimodal & $100$ & 0.038 & 0.035 & 0.032 & 0.036 & 0.020 \\ 
Bimodal &  $1000$ & 0.019 & 0.021 & 0.019 & 0.020 & 0.017 \\ 
 \hline 
 Unimodal & $10$ & 0.444 & 0.451 & 0.425 & 0.410 & 0.432 \\ 
 Unimodal  & $20$ & 0.392 & 0.390 & 0.402 & 0.395 & 0.412 \\ 
Unimodal & $100$ & 0.280 & 0.266 & 0.272 & 0.280 & 0.297 \\ 
Unimodal  &  $1000$ & 0.098 & 0.102 & 0.091 & 0.104 & 0.113 \\ 
\hline
\end{tabular}
\end{table}

\section{Application to breast cancer genomics}
 \label{app}
 
  As a real data application, we consider classifying breast cancer (BRCA) tumor subtypes using data on microRNA (miRNA) abundance that are publicly available from the Cancer Genome Atlas (TCGA).  We use miRNA-Seq data for $n=1047$ BRCA tumor samples from different individuals, that was previously curated for a pan-cancer clustering application \citep{hoadley2018cell}.  The tumors are classified into one of $4$ canonical subtypes based on the PAM$50$ classifier for gene (mRNA) expression \citep{parker2009supervised}: Luminal A (LumA) ($n=572$), Luminal B (LumB) ($n=203$), Her2 $(n=83$), and Basal ($189$).  Previous exploratory analysis have shown that miRNA data shares some of the same BRCA subtype distinctions as gene expression \citep{park2019integrative,lock2013bayesian}.  Here, we examine the power of miRNAs to distinguish the subtypes using a supervised approach.   
  
  The miRNA expression read counts were log($1+x$)-transformed and centered to have mean $0$.  After transformation, we keep miRNAs with standard deviation greater than $0.5$, leaving $d=489$ miRNAs.  We consider separate classification tasks to discriminate each pair of subtypes, giving $6$ comparisons.  For each comparison, we apply (1) Bayesian DWD (BayesDWD) with uniform prior on $\lambda$, (2) DWD (i.e., the posterior mode) via the R package {\tt sdwd}, (3) SVM via the package {\tt e1071}, (4) random forest classification \citep{breiman2001random} via the package {\tt randomForest}, and (5) LDA via the package {\tt MASS}.  We assess each method by the average misclassification rate on the test sets under 10-fold cross-validation.  The results are shown in Table~\ref{testclass}.  Bayesian DWD, DWD, and SVM have similar classification performance across the different comparisons, while random forests and LDA perform less well by comparison.  For methods that give a class probability we consider the calibration of the estimated probabilities on the test sets vs.\ their empirical class proportions, as in Section~\ref{genmodsim}. Figure~\ref{fig:brca.calib} shows that the probabilities inferred for Bayesian DWD are overall well-calibrated.  The probabilities under the random forest  are also reasonable, but tend toward being over-conservative.  The probabilities for LDA are not well-calibrated. LDA suffers from over-fitting in situations with higher dimension and lower sample size; for this application, the vast majority (93\%) of predicted probabilities using LDA are below $0.02$ or above $0.98$, leaving limited data for the remaining bins.    
  
  \begin{table}
  \caption{ \label{testclass} Test classification rates for different subtype comparisons under cross validation using Bayesian DWD (BayesDWD), DWD, SVM, random forests (RF), and LDA. }
\centering
\begin{tabular}{c|ccccc}
  \hline
 Comparison & BayesDWD & DWD & SVM & RF & LDA \\ 
  \hline
LumA vs.\ LumB & 0.168 & 0.156 & 0.158 & 0.186 & 0.296 \\ 
  LumA vs.\ Her2 & 0.054 & 0.045 & 0.059 & 0.070 & 0.178 \\ 
  LumA vs.\ Basal & 0.026 & 0.028 & 0.025 & 0.030 & 0.084 \\ 
  LumB vs.\ Her2 & 0.129 & 0.124 & 0.155 & 0.169 & 0.188 \\ 
  LumB vs.\ Basal & 0.046 & 0.043 & 0.044 & 0.047 & 0.137 \\ 
  Her2 vs.\ Basal & 0.066 & 0.064 & 0.053 & 0.075 & 0.111 \\ 
   \hline
\end{tabular}
\end{table}

\begin{figure}[!ht]
\centering
\includegraphics[width=0.8\textwidth]{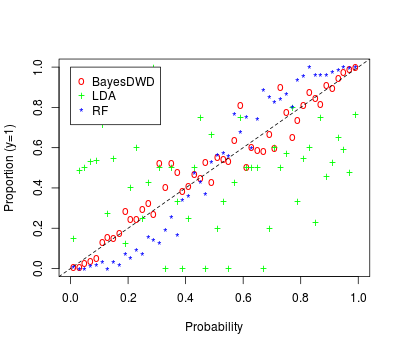}	
\caption{Estimated probabilities vs.\ observed proportions on test observations, aggregated across all comparisons, using Bayesian DWD, random forests (RF), or LDA.}
\label{fig:brca.calib}
\end{figure}

To illustrate the Bayesian DWD results, we focus on the comparison with the best classification accuracy from Table~\ref{testclass} (LumA  vs.\ Basal) and the comparison with the worst classification accuracy (LumA vs.\ LumB).  Figure~\ref{fig:cis_brca} plots the posterior mean of the DWD scores and their $95\%$ credible intervals for each comparison.  Note that the DWD scores are almost perfectly separated for LumA  vs.\ Basal, but not for LumA vs.\ LumB. The credible intervals for each plot help to understand the level of uncertainty in the discriminating projections.  

\begin{figure}[!ht]
\centering
\includegraphics[width=0.8\textwidth]{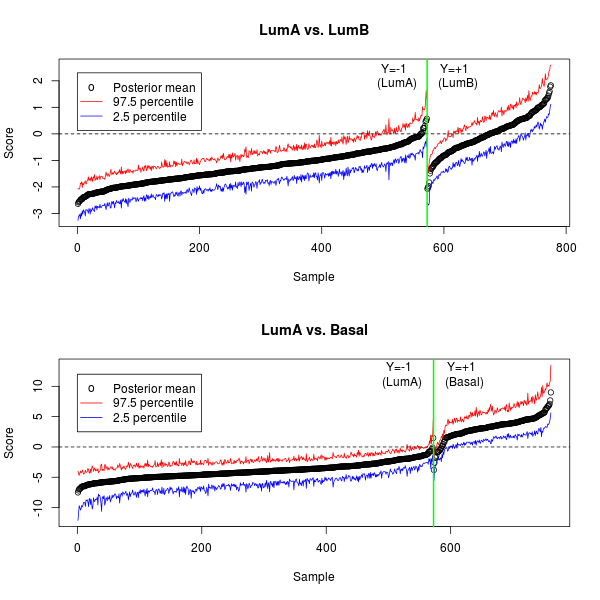}	
\caption{Mean DWD scores and associated 95\% credible intervals for LumA  vs.\ Basal and for LumA vs.\ LumB.}
\label{fig:cis_brca}
\end{figure}

\section{Discussion}
 \label{disc}
 
The Bayesian formulation of DWD is straightforward to modify or extend via  changes to the likelihood or prior.  As a future direction, we are interested in extending the model to accommodate multiple sources of data (e.g., multiple 'omics or clinical sources) or data with multiway structure \citep{lyu2016discriminating}.  In this article we have focused on the original and most widely used version of DWD, but the Bayesian framework may be extended for other versions including multi-class DWD \citep{huang2013multiclass}, weighted DWD \citep{qiao2010weighted}, sparse DWD \citep{wei2016direction}, and kernel DWD \citep{wang2018another,wang2019multicategory}.  Additionally, a generalized version of DWD was introduced in \citep{hall2005geometric}, wherein the distances $r_i$ in \eqref{dwd_obj} are raised to a power $q$.  The choice of $q$ modifies the loss function $V(\cdot)$ in \eqref{vu} \citep{wang2018another}, which in turn affects the probabilistic link \eqref{ydist}; the Bayesian formulation is potentially very useful in this context, because one can put a prior on $q$ and infer it within the model.          
 
Computing time and resources can limit a fully Bayesian treatment of DWD for some applications, as discussed in Section~\ref{computing}.  In such cases, the DWD solution can still be interpreted as a posterior mode to give point estimates for class probabilities, and the normal approximation in Section~\ref{asymp.norm} can be used to quickly assess uncertainty in the estimated coefficients.  Alternative computational approaches that facilitate a fully Bayesian treatment for massive datasets are worth pursuing. 

\acks{This work was supported by the National Institutes of Health (NIH) grant R01-GM130622.}

\appendix

\section{Proofs}
\setcounter{theorem}{0}
\begin{theorem}
If $\X \in \mathbb{R}^{d \times n}$, $\lambda \geq 0$, and $\y \in \{-1,1\}^n$ where $y_i=-1$ for some $i$ and $y_j=1$ for some $j$, then $p(\bbeta,\beta_0 \mid \X, \y, \lambda)$ in \eqref{bpost} gives a proper probability density function.   	
\end{theorem}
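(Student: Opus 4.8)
\section*{Proof proposal}

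The plan is to show that the non-negative continuous function on the right-hand side of \eqref{bpost} has finite integral over $(\beta_0,\bbeta)\in\mathbb{R}^{d+1}$; since it is also strictly positive, dividing by that integral produces a proper probability density. Write the integrand as $A(\beta_0,\bbeta)\,B(\bbeta)$, where $A=\prod_{i=1}^n e^{-V(y_i(\beta_0+\x_i^T\bbeta))}$ and $B=e^{-(\lambda n/2)||\bbeta||_2^2}$ (the marginal class probabilities drop out of \eqref{bpost}, so this matches \eqref{termA}--\eqref{termB} up to the normalization already absorbed there). The only structural fact about $V$ I would use is the pointwise lower bound
\[
V(u)\ \ge\ (1-u)_+:=\max(1-u,0)\ \ge\ 0 ,
\]
which is immediate from \eqref{vu}: on $u\le 1/2$ both sides equal $1-u$, and on $u>1/2$ one has $4u(1-u)\le 1$, hence $1/(4u)\ge 1-u$ and also $1/(4u)>0$. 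Exponentiating gives the convenient envelope $e^{-V(u)}\le\min\!\big(1,\,e^{\,u-1}\big)$ for every $u\in\mathbb{R}$; in particular every factor of $A$ is at most $1$.

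Next I would bring in the hypothesis that both classes occur. Fix indices $i_-,i_+$ with $y_{i_-}=-1$ and $y_{i_+}=1$. Dropping every factor of $A$ except the two associated with $i_-$ and $i_+$ (each dropped factor is $\le 1$) and applying the envelope to those two,
\[
A(\beta_0,\bbeta)\ \le\ \min\!\big(1,\ e^{-(\beta_0+\x_{i_-}^T\bbeta)-1}\big)\ \min\!\big(1,\ e^{\,(\beta_0+\x_{i_+}^T\bbeta)-1}\big).
\]
The decisive point is that the first factor decays exponentially as $\beta_0\to+\infty$ and the second decays exponentially as $\beta_0\to-\infty$, so their product is integrable in $\beta_0$ for every fixed $\bbeta$.

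I would then carry out the integration iteratively, $\beta_0$ first. For fixed $\bbeta$, split the $\beta_0$-line at $\beta_0=-1-\x_{i_-}^T\bbeta$ (where the first minimum switches from $1$ to its exponential branch): on the left piece the first factor is $1$ and the second is bounded by $e^{(\beta_0+\x_{i_+}^T\bbeta)-1}$, on the right piece the second factor is $\le 1$ and the first is $e^{-(\beta_0+\x_{i_-}^T\bbeta)-1}$, and each resulting one-dimensional exponential integral is elementary. This yields a bound of the shape
\[
\int_{\mathbb{R}} A(\beta_0,\bbeta)\,d\beta_0\ \le\ 2+\big|(\x_{i_+}-\x_{i_-})^T\bbeta\big| .
\]
Integrating this against $B(\bbeta)=e^{-(\lambda n/2)||\bbeta||_2^2}$ over $\bbeta\in\mathbb{R}^d$ is finite whenever $\lambda>0$, since a Gaussian kernel has finite total mass and finite first absolute moment; hence $\int_{\mathbb{R}^{d+1}}A\,B<\infty$ and \eqref{bpost} is proper.

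The step I expect to be the real obstacle is precisely this $\beta_0$-integration for $\lambda>0$: the term $B$ gives no control whatsoever in the intercept direction, so all of the decay there must be squeezed out of just the two opposite-class factors of $A$, which is why the bound on both $V$ and the two-class hypothesis are doing genuine work. The only genuinely separate case is $\lambda=0$, where $B\equiv 1$ and one needs integrability of $A$ alone over $\mathbb{R}^{d+1}$; applying the envelope to all $n$ factors gives $A\le\exp\!\big(-\sum_{i}(1-y_i(\beta_0+\x_i^T\bbeta))_+\big)$, i.e.\ $A$ is dominated by the exponential of the negative hinge loss, which grows at least linearly—so $A\le e^{\,c'-c\|(\beta_0,\bbeta)\|}$ with $c>0$, integrable on $\mathbb{R}^{d+1}$—provided $\X$ and $\y$ are such that this polyhedral loss is coercive in $(\beta_0,\bbeta)$; identifying that non-degeneracy condition on $\X$ (it can fail, e.g.\ if $\X=\mathbf{0}$ or more generally in the HDLSS regime where the $\x_i$ do not span $\mathbb{R}^d$) is the delicate point for the $\lambda=0$ endpoint.
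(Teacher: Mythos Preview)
Your argument is essentially the paper's: integrate out $\beta_0$ first using one factor from each class to obtain a bound that is affine in $\bbeta$, then integrate that bound against the Gaussian kernel $e^{-(\lambda n/2)\|\bbeta\|_2^2}$. The paper splits $\beta_0$ at $\pm\|\X\|_1\|\bbeta\|_1$ and uses $e^{-V(-u)}=e^{1-u}$ directly on the tails, obtaining $2\|\X\|_1\|\bbeta\|_1+2e$ in place of your $2+|(\x_{i_+}-\x_{i_-})^T\bbeta|$; your hinge envelope $V(u)\ge(1-u)_+$ is a slightly cleaner way to reach the same endpoint.

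Two remarks. First, your two-piece split as written is not quite enough: on the left half-line you bound the second factor by $e^{(\beta_0+\x_{i_+}^T\bbeta)-1}$ rather than by $\min(1,e^{(\beta_0+\x_{i_+}^T\bbeta)-1})$, which yields $1+e^{(\x_{i_+}-\x_{i_-})^T\bbeta-2}$, not an affine bound. You need a three-region split (or keep the $\min$ on the second factor) to get the stated $2+|(\x_{i_+}-\x_{i_-})^T\bbeta|$; this is a one-line fix. Second, your caution about $\lambda=0$ is well placed and in fact sharper than the paper: the paper's own final displayed bound contains $(2\pi/\lambda n)^{p/2}$ and therefore only establishes the claim for $\lambda>0$, while the $\lambda=0$ case genuinely fails whenever the $\x_i$ do not span $\mathbb{R}^d$ (e.g.\ any HDLSS configuration, or $\X=\mathbf{0}$). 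So the theorem as stated with $\lambda\ge 0$ is too strong, and your identification of the needed coercivity condition is the correct diagnosis.
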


\begin{proof}
Without loss of generality assume $y_1=-1$ and $y_2=1$.  Define \[h(\bbeta,\beta_0 \mid \X, \y, \lambda) = \left[ \prod_{i=1}^n e^{-V\left(y_i(\beta_0 +\x_{i}^T \bbeta)\right)} \right]   e^{-(\lambda n/2) ||\bbeta||^2_2}.\]
By \eqref{bpost}, $p(\bbeta,\beta_0 \mid \X, \y, \lambda) \propto h(\bbeta,\beta_0 \mid \X, \y, \lambda)$. It suffices to show that (i) $h(\bbeta,\beta_0 \mid \X, \y, \lambda)$ is non-negative and (ii) $h(\bbeta,\beta_0 \mid \X, \y, \lambda)$ has finite integral over $\bbeta$ and $\beta_0$.  Condition (i) is trivially satisfied.  For condition (ii), note that $V(u)>0 \; \forall u\in\mathbb{R}$, where $V(u)$ is defined in \eqref{vu}.  Thus, 
\[e^{-V\left(y_i(\beta_0 +\x_{i}^T \bbeta)\right)}<1 \; \; \text{ for } i=1,\hdots,n.\]
Let $||\cdot||_1$ define the $l_1$-norm, and note that $||\X||_1||\bbeta||_1 > \mbox{max}\left(\{|\x_{i}^T \bbeta|: i=1,\hdots,n\}\right)$. For fixed $\bbeta \in \mathbb{R}^p$, 
\begin{align*}
  \int \prod_{i=1}^n e^{-V\left(y_i(\beta_0 +\x_{i}^T \bbeta)\right)} \, d\beta_0 	
 &= \int\displaylimits_{-||\X||_1||\bbeta||_1}^{||\X||_1||\bbeta||_1} \prod_{i=1}^n e^{-V\left(y_1(\beta_0 +\x_{1}^T \bbeta)\right)} \, d\beta_0 \\  & \; \; \; \; \; \; \; \; \; \; \; +\int\displaylimits_{-\infty}^{-||\X||_1||\bbeta||_1} \prod_{i=1}^n e^{-V\left(y_1(\beta_0 +\x_{1}^T \bbeta)\right)} \, d\beta_0 \\ & \; \; \; \; \; \; \; \; \; \; \; +\int\displaylimits_{||\X||_1||\bbeta||_1}^{\infty} \prod_{i=1}^n e^{-V\left(y_1(\beta_0 +\x_{1}^T \bbeta)\right)} \, d\beta_0 \\
 &< 2||\X||_1||\bbeta||_1+\int\displaylimits_{-\infty}^{-||\X||_1||\bbeta||_1}  e^{-V\left(y_2(\beta_0 +\x_{1}^T \bbeta)\right)} d\beta_0 \\  &\; \; \; \; \;  \;\;\;\;\;\;\;\;\;\;\;\;\;\;\;\;\;\;\;+\int\displaylimits_{||\X||_1||\bbeta||_1}^{\infty} e^{-V\left(y_1(\beta_0 +\x_{1}^T \bbeta)\right)} d\beta_0\\
 &< 2||\X||_1||\bbeta||_1+2 \int\displaylimits_{0}^{\infty} e^{-V(-u)}du \\
 &=2||\X||_1||\bbeta||_1+2 \int\displaylimits_{0}^{\infty} e^{1-u} du \\
 &=2||\X||_1||\bbeta||_1+2e.
\end{align*}
It follows that 
\begin{align*}
\int \int h(\bbeta,\beta_0 \mid \X, \y, \lambda) d\beta_0 d\bbeta &=  \int \left[ \int \prod_{i=1}^n e^{-V\left(y_i(\beta_0 +\x_{i}^T \bbeta)\right)} \, d\beta_0\right] e^{-(\lambda n/2) ||\bbeta||^2_2} d\bbeta \\
&< \int (2||\X||_1||\bbeta||_1+2e)e^{-(\lambda n/2) ||\bbeta||^2_2} d\bbeta \\
&=2||\X||_1 \int ||\bbeta||_1 e^{-(\lambda n/2) ||\bbeta||^2_2} d\bbeta +2e\int e^{-(\lambda n/2) ||\bbeta||^2_2} d\bbeta \\ 
&=2||\X||_1  \cdot p \left(\frac{\lambda n}{2}\right)\left(\frac{2 \pi}{\lambda n}\right)^{\frac{p-1}{2}}+2e\left(\frac{2 \pi}{\lambda n}\right)^{\frac{p}{2}}.
\end{align*} 
\end{proof}

\setcounter{theorem}{1}
\begin{theorem}
If $\X \in \mathbb{R}^{d \times n}$, $\lambda > 0$, and $\beta_0 \in  \mathbb{R}$,  then \[p(\bbeta \mid \X, \beta_0, \lambda) \propto A \cdot B\]
gives a proper probability density function, where $A$ and $B$ are defined in~\eqref{termA} and~\eqref{termB}.   	
\end{theorem}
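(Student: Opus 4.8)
The plan is to verify the two conditions that make $A\cdot B$, after normalization, a proper density over $\bbeta\in\mathbb{R}^d$ (with $\beta_0$ and $\X$ held fixed): that $A\cdot B$ is non-negative, and that $\int_{\mathbb{R}^d} A\cdot B\,d\bbeta$ is finite; since $A\cdot B$ is in fact strictly positive everywhere, dividing by this finite positive constant then yields the claimed density. Non-negativity is immediate: $B=e^{-(\lambda n/2)||\bbeta||_2^2}>0$, and each factor of $A$ in \eqref{termA} is a convex combination of the positive quantities $e^{-V(\pm(\beta_0+\x_i^T\bbeta))}$ with non-negative weights $P(y_i=1)$, $P(y_i=-1)$ summing to one.

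The one step that takes an idea --- and it is an easy one --- is a uniform upper bound on $A$. I would first observe that the DWD loss $V$ in \eqref{vu} is strictly positive on all of $\mathbb{R}$: for $u\le 1/2$ we have $V(u)=1-u\ge 1/2>0$, and for $u>1/2$ we have $V(u)=1/(4u)>0$. Hence $e^{-V(u)}<1$ for every $u$, so every factor of $A$ obeys $P(y_i=1)e^{-V(\beta_0+\x_i^T\bbeta)}+P(y_i=-1)e^{-V(-\beta_0-\x_i^T\bbeta)}<1$, and therefore $0<A(\bbeta)<1$ for all $\bbeta$, uniformly in $\beta_0$ and $\X$. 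Consequently $0<A\cdot B<B$ pointwise.

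It remains only to integrate the dominating function. Since $\lambda>0$, $B$ is a constant multiple of a nondegenerate $d$-dimensional Gaussian density, so $\int_{\mathbb{R}^d} B\,d\bbeta=(2\pi/(\lambda n))^{d/2}<\infty$; comparison then gives $0<\int_{\mathbb{R}^d} A\cdot B\,d\bbeta<\infty$, which is exactly what is needed. I do not anticipate any genuine obstacle: in contrast to Theorem~1, no delicate estimate along an unbounded direction is required, because $A$ is uniformly bounded by $1$ and the Gaussian factor $B$ by itself already guarantees integrability. This is precisely why the hypothesis here is $\lambda>0$ rather than $\lambda\ge 0$ --- when $\lambda=0$ the factor $B$ is constant and one would instead need $A$ to decay in every direction of $\mathbb{R}^d$, which can fail, for example along any direction orthogonal to all of $\x_1,\dots,\x_n$.
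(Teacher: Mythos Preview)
Your proof is correct and follows essentially the same approach as the paper: bound $A$ uniformly by $1$ using $V>0$, then integrate the Gaussian factor $B$. Your write-up is in fact a bit more careful than the paper's, since you explicitly verify strict positivity (so that the normalizing constant is nonzero) and justify why $\lambda>0$ is needed here.
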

\begin{proof}
Define $h(\bbeta,\beta_0, \X, \beta_0, \lambda) = A \cdot B$, where $A$ and $B$ are defined in~\eqref{termA} and~\eqref{termB}.  Note that $A$ and $B$ are always non-negative, and term $A$ is alway less than $1$.   Thus,
\begin{align*}
\int h(\bbeta,\beta_0, \X, \beta_0, \lambda) d \bbeta &< 	\int e^{-(\lambda n/2) ||\bbeta||^2_2} d\bbeta \\
&=\left(\frac{2 \pi}{\lambda n}\right)^{\frac{p}{2}}. 
\end{align*}

\end{proof}

\section{Algorithm}
\label{alg}

Here we describe in detail the algorithm to simulate draws $\{\bbeta^{(t)}, \beta_0^{(t)}, \lambda^{(t)}\}_{t=1}^T$ from their joint posterior distribution $p(\bbeta^{(t)}, \beta_0^{(t)}, \lambda^{(t)} \mid \X, \y)$. First, estimate the normalizing function $\hat{\phi}(\X,\lambda, \beta_0)$ as described in Section~\ref{lambda.inf}.  Then, initialize $\lambda^{(0)}$ to a positive value (e.g., $\lambda^{(0)}=1$) and initialize $\{\bbeta^{(0)}, \beta_0^{(0)}\}$ to the conditional posterior mode using the {\tt sdwd} package \citep{sparseDWD} or other software.  The algorithm is given below for iterations $t=1,\hdots,T$.  

\begin{enumerate}
\item For $i=1,\hdots,d$, update $\beta_i$ as follows:
\begin{enumerate}
\item Generate proposal $\beta_i^* \sim N(0,\sigma_*^2)$ 
\item Set $\tilde{\bbeta} = [\beta_1^{(t)} \dots \beta_{i-1}^{(t)} \; \beta_i^{(t-1)} \dots \beta_d^{(t-1)}]$ and $\bbeta^* = [\beta_1^{(t)} \dots \beta_{i-1}^{(t)} \; \beta_i^{*} \dots \beta_d^{(t-1)}]$
\item Compute $r=\frac{p(\bbeta^*,\beta_0^{(t-1)} \mid \X, \y, \lambda^{(t-1)})}{p(\tilde{\bbeta},\beta_0^{(t-1)} \mid \X, \y, \lambda^{(t-1)})}$, with the numerator and denominator given by \eqref{bpost}
\item Generate $u \sim \mbox{Uniform}(0,1)$
\item If $u<r$ set $\beta_i^{(t)}=\beta_i^*$, otherwise set $\beta_i^{(t)}=\beta_i^{(t-1)}$.
\end{enumerate}
\item Update $\beta_0$ as follows:
\begin{enumerate}
\item Generate proposal $\beta_0^* \sim N(0,0.25)$
\item Compute $r=\frac{p(\bbeta^{(t)},\beta_0^{*} \mid \X, \y, \lambda^{(t-1))}}{p(\bbeta^{(t)},\beta_0^{(t-1)} \mid \X, \y, \lambda^{(t-1)}}$, with the numerator and denominator given by \eqref{bpost}
\item Generate $u \sim \mbox{Uniform}(0,1)$
\item If $u<r$ set $\beta_0^{(t)}=\beta_0^*$, otherwise set $\beta_i^{(t)}=\beta_0^{(t-1)}$.
\end{enumerate}
\item Update $\lambda$ as follows:
\begin{enumerate}
\item Generate proposal $\lambda^{*}$ by $\lambda^*=e^{\mbox{log} \lambda^{(t-1)}+Z/4}$ where $Z \sim N(0,1)$  
\item  Compute $r=\frac{\lambda^* \hat{\phi}(\X,\lambda^*) e^{-0.5 n \lambda^* || \bbeta^{(t)} ||^2 }}{\lambda^{(t-1)} \hat{\phi}(\X,\lambda^{(t-1)}) e^{-0.5 n \lambda^{(t-1)} || \bbeta^{(t)} ||^2}}$
\item Generate $u \sim \mbox{Uniform} (0,1)$
\item If $u < r$ set $\lambda^{(t)}=\lambda^*$, otherwise set $\lambda^{(t)}= \lambda^{(t-1)}$.
\end{enumerate} 
\item Set $\sigma^2_*$ to half of the sample variance of $\{\beta_i\}_{i=1}^d$.
\end{enumerate}

Note that if $\lambda$ is fixed, step (c) above can be skipped. The adaptive updating of the proposal variance in step (d) is not necessary for the validity of the algorithm, but can improve mixing considerably when the approximate scale of the coefficients $\bbeta$ is uncertain.

\bibliography{bibliography.bib}

\end{document}